\newtheorem{definition}{Definition}
\newtheorem{proposition}[definition]{Proposition}
\newtheorem{example}[definition]{Example}
\newtheorem{theorem}[definition]{Theorem}
\begin{document}

\title{Reidentification and k-anonymity: \\a model for disclosure risk in graphs}
\author{
  Klara Stokes$^{1}$, Vicen\c c Torra$^{2}$\\
\small $^1$ Universitat Rovira i Virgili\\
\small  Dept. of Computer Engineering and Maths, \\
\small  UNESCO Chair in Data Privacy\\
\small  Av. Pa\"{\i}sos Catalans 26, 43007 Tarragona, Catalonia, Spain\\
\small  Email: klara.stokes@urv.cat\\ 
\small $^2$ IIIA, Institut d'Investigaci\'o en Intel$\cdot$lig\`encia Artificial\\ 
\small  CSIC, Consejo Superior de Investigaciones Cient\'{\i}ficas\\ 
\small  Campus UAB, 08193 Bellaterra, Catalonia, Spain\\
\small  Email: vtorra@iiia.csic.es\\ 
}

\maketitle

\begin{abstract}
In this article we provide a formal framework for reidentification in general. 
We define $n$-confusion as a concept for modelling the anonymity of a database table and we prove that $n$-confusion is a generalization of $k$-anonymity. 
After a short survey on the different available definitions of $k$-anonymity for graphs we provide a new definition for $k$-anonymous graph, which we consider to be the correct definition. 
We provide a description of the $k$-anonymous graphs, both for the regular and the non-regular case. 
We also introduce the more flexible concept of $(k,l)$-anonymous graph. 
Our definition of $(k,l)$-anonymous graph is meant to replace a previous definition of $(k,l)$-anonymous graph, 
which we here prove to have severe weaknesses. 
Finally we provide a set of algorithms for $k$-anonymization of graphs. 

\end{abstract}

\section{Introduction}
In data privacy, the assessment of risk is one of the elements of major importance. At present, several approaches have been studied in the literature. The major approaches are k-anonymity~\cite{{ref:Samarati.2001},{ref:Samarati.Sweeney.1998},{ref:Sweeney:2002:algorithm},{ref:Sweeney.2002:IJUFKS:def}}, reidentification~\cite{{ref:Domingo.Torra.2001:DRExperiments:Wash}, {ref:Winkler.2004:PSD}} and differential privacy~\cite{ref:Dwork.2006}. 

In this paper, we focus on two of them: reidentification and $k$-anonymity. The former evaluates the disclosure risk of a protected data set measuring the chances that an intruder can link his information and the one in the protected data set. In contrast, $k$-anonymity tries to avoid any reidentification, producing a protected data set where each record is cloaked into a set of other $k-1$ records. 

In this work we formalize the reidentification process and we use this formalization to discuss the concept of $k$-anonymity and then propose the concept of $n$-confusion. We also prove that $n$-confusion is a generalization of $k$-anonymity. 

Then, we discuss the application of the concept of $k$-anonymity to graphs. At present, due to the interest of online social networks, several authors have studied data protection for graphs. It is relevant here to consider the works in~\cite{{ref:Feder.Nabar.Terzi.2008},{ref:Hay.Miklau.Jensen.2008},{ref:Liu.Terzi.2008},{ref:Zhou.Pei.2008}}, in which alternative definitions of $k$-anonymity for graphs have been presented. In this work we discuss these different definitions and we show that the definition in~\cite{ref:Feder.Nabar.Terzi.2008} have severe weaknesses. Then, we first provide an alternative definition for $k$-anonymity that provides enough security at the cost of being quite restrictive, and later the definition of $(k,l)$-anonymity, a relaxation of the former one. 

The paper discusses several properties of the definitions. In particular, we study the characterization of the $k$-anonymous graphs. We also provide algorithms for transforming a graph into a $k$-anonymized one, for calculating the degree of $(k,l)$-anonymity of a graph given $k$, and to increase the $l$ of the $(k,l)$-anonymity of a graph. 

The structure of the paper is as follows. Section~\ref{sec:2} discusses disclosure risk on online social networks focusing on reidentification and $k$-anonymity. Section~\ref{sec:3} reviews previous approaches of $k$-anonymity for graphs, and presents an attack for the approach introduced in~\cite{ref:Feder.Nabar.Terzi.2008}. Section~\ref{sec:4} introduces a new definition for $k$-anonymity and studies some properties of this definition. Section~\ref{sec:5} presents a relaxation $(k,l)$-anonymity. Section~\ref{sec:6} includes the algorithms that we have developed related to $(k,l)$-anonymity. The paper finishes with the conclusions. 

\section{Disclosure risk evaluation}
\label{sec:2}
In this section we first present a formal framework for evaluating disclosure risk in data privacy in general (see also~\cite{PAIS2012}). 
Then we will focus on disclosure risk for online social networks.  
\subsection{Reidentification in privacy protected databases}
A database is a collection of records of data.
In this article we will suppose that all records correspond to distinct individuals or objects.  
Every record has a unique identifier and is divided into attributes. 
The attributes can be very specific, as the attributes ``height'' or ``gender'', or more general, as the attributes ``text'' or ``sequence of binary numbers''. 

Suppose that the database can be represented as a single table. 
Let the records be the rows of the table and let the attributes be the columns. 
The intersection of a row and an attribute is a cell in the table, and we call the data in the cells the entries of the database.  

Let $T$ be a table with $n$ records and $m$ attributes. 
We define the partition set $\mathcal{P}(T)$ of $T$ to be the set of subsets of the underlying set of entries of $T$, $\bigcup_{i,j=1}^{n,m}T[i,j].$

\begin{definition}\label{1:def:methodanonymization}
A method for anonymization of databases is any transformation or operator
$$\begin{array}{rllll}\rho:&D&\rightarrow &D\\\\
&X&\mapsto&Y,
\end{array}$$
where $D$ is a space of databases. 
\end{definition}
Then $\rho$, given a database $X$, returns a database $Y$. 
Since $Y$ is a database, all entries in $Y$ will correspond to a unique individual or object, which we will suppose to be the same individuals as the ones behind the records in $X$. 
Usually it is assumed that there is, in some sense, less sensible information about the individuals behind the records in $X$ in the transformed database $Y$ than there was in the 
original database $X$. 

\begin{definition}\label{1:def:reidentification}
Let $\rho$ be a method for anonymization of databases, $X$ a table with $n$ records indexed by $I$ in the space of tables $D$ and $Y=\rho(X)$ the anonymization of $X$ using $\rho$. 
Then a re-identification method is a function that 
given a collection of entries $y$ in $\mathcal{P}(Y)$ and some additional information from a space of auxiliary informations $A$, 
returns the probability that $y$ correspond to entries from the record with index $i\in I$,
$$\begin{array}{rccc}r:&\mathcal{P}(Y) \times A &\rightarrow &[0,1]^{n}\\\\
&(y,a) &\mapsto &\left(P(y\mbox{ correspond to entries from } X[i]): i\in I\right).
\end{array}$$
Consider the objective probability distribution corresponding to the re-identification problem. Then, we require from a re-identification method that it returns a probability distribution that is compatible with this probability, also after missing some relevant information. Compatibility can be modeled in terms of compatibility of belief functions (see \cite{{ref:Chateauneuf.1994},{ref:Smets.Kennes.1994}}). 
\end{definition}
In this definition, the probability $r(y,a)$ could have been expressed as conditioned by $a$. However, in this article we prefer to use the notation above.

Compatibility implies that the more evidence we have, the less entropy the probability has. Because of that, the method returns $(1/n, \dots, 1/n)$ if it has no evidence of the original record corresponding to the protected record $y$, and that, as the evidence increases, the probabilities of the corresponding indices are agumented. An example of this situation is when the variables of a data set are protected independently by means of k-anonymity. Then, re-identification can be applied to protected data using only some of the attributes. The distributions computed by these methods should be compatible with the ones considered when all attributes are taken into account. When no attribute is considered, the algorithm should lead to the probability $(1/n, \dots, 1/n)$.
 
In this way we avoid re-identification methods with false positives, since these  would disturb the rest of the discussion. Also, probabilities are assumed to be defined so that the same value will apply to different protected records whenever these have the same value.

A common assumption is to consider that re-identification occurs when the probability function that is returned by the re-identification method takes the value 1 at one index, say at $i_0$, and the value 0 at all the other indices. That is, given the auxiliary information $a$ there is probability 1 that $y$ belongs to the record with index $i_0$ in $X$. 

We say that the entries $s\in \mathcal{P}(Y)$ are linked to a collection of indices $J\subseteq I$ if the probabilities that are returned by the re-identification method take non-zero values over the indices $J$ and are zero on the complement $I\setminus J$. In a nice and regular situation, a possible non-zero value for the re-identification method over $J$ is then $1/|J|$. 

Here, auxiliary information denotes any information used to achieve a better performance of the re-identification process. It is common that researchers use parts of the original database $X$ as such auxiliary information. One can however argue that for example knowledge of the method of anonymization is also auxiliary information. 
When the database covers only part of a population, and it is not known who is in the database, then information about individuals that are not in the original database can serve as auxiliary information. 
In general, we do not assume that the auxiliary information can be indexed by individuals or that it has any particular structure at all.

\begin{definition}
We define the confusion of a method of re-identification $r$, 
with respect to the anonymized database $Y=\rho(X)$, 
the auxiliary information $A$ and the threshold $t\in [0,1]$ as 
$$\mbox{\emph{conf}}(r,Y,A,t)=\inf_{y\in Y} M_{r,t}(y,A)$$
where 
$$M_{r,t}(y,A)=\left\{\begin{array}{ll}\inf_{a\in A}|\{i\in I:r(y,a)[i]\geq t\}|    \vspace{0.2cm} \\ 
~~~~~~~~~~~~~~~~~~~~~\mbox{if } \inf_{a\in A}|\{i\in I:r(y,a)[i]\geq t\}|>0\\  \vspace{0.3cm} \\ 
|I|~~~~~~~~~~~~~~~~~~\mbox{if } \inf_{a\in A}|\{i\in I:r(y,a)[i]\geq t\}|=0
\end{array}\right.$$
\end{definition}

The space of auxiliary information contains any information that could be useful and accessible to the adversary. 
Examples of auxiliary information is information in the public domain and information on the method that was used in order to anonymize the data.  A bad determination of $A$ could imply that the confusion of a re-identification method is overestimated. 

We have assumed that we know who is in the database and who is not. 
It can be proved that, under this assumption, any additional information that is useful for a re-identification method can be deduced from the original database $X$. 
Hence, in this particular case, we may assume that $A=X$. 

\begin{definition}
\label{def:ntconfusion}
Given a space of databases $D$, a space of auxiliary information $A$ and a method of anonymization of databases $\rho$, 
we say that $\rho$ provides $(n,t)$-confusion if for all re-identification methods $r$, 
and all anonymized databases $\rho(X)\in D$ the confusion of $r$ with respect to $\rho(X)$ and $A$ is larger or equal to $n$ for the fixed threshold $0<t\leq 1/n$.  
\end{definition}


The $(n,t)$-confusion therefore measures the smallest cardinality of a set of individuals for which the re-identification methods gives probability higher than $t$, calculated for all protected registers $y\in \rho(X)$. 

$(n,t)$-Confusion is more reliable as a measure of anonymity when $n$ and $t$ are both large, simultaneously. That is, an anonymization method providing $(n',t')$-confusion is better or equal than one that provides $(n,t)$-confusion when $t'\geq t$ and $n'\geq n$. This statement is based on the following observations. 

On the one hand, if $n$ is small, then an adversary might be able to form a collusion of size $n-1$ which could break the anonymity of the $n$th register. This issue is analogous to what we get if we implement $k$-anonymity with a small $k$. 

On the other hand, it is interesting to observe what may occur if the threshold $t$ is much smaller than $1/n$. For example, say that we apply all available re-identification methods to the protected record $y$. Suppose that the best result gives us $r(y,a)[i_0]=0.9$ and that there are $n-1$ other indices $i$ for which $r(y,a)[i]=0.1/(n-1)$. Then we have $(n,t)$-confusion with $t=0.05/(n-1)$. To avoid this issue, an adequate value for $t$ could be approximately $t=1/n$. 

Next we present a proposal for a definition of $n$-confusion.
\begin{definition}
Let notations be as in Definition~\ref{def:ntconfusion}. 
We say that the anonymization method $\rho$ provides $n$-confusion if there is a $t>0$ such that $\rho$ provides $(n,t)$-confusion.  
\end{definition} 

This definition of $n$-confusion is designed so that $n$ will be the measure of the smallest number of indices in $1,\dots,n$ for which the result is a non-zero probability, when applying a re-identification method to a protected record in $\rho(X)$. In this way, anonymization methods for which re-identification returns very distinct probabilities also provide $n$-confusion, whenever at least $n$ of these probabilities are non-zero for every protected record. This mimics $k$-anonymity in the sense that in order to re-identify an individual with absolute certainty, a collusion of the $k-1$ other target individuals is necessary.

Other possible approaches are to consider that a table satisfies $n$-confusion
\begin{itemize}
\item when the highest value of the best re-identification method is taken at least $n$ times, for every protected record $y$ and all auxiliary information;
\item when the highest value of the best re-identification method is at most $1/n$. 
\end{itemize}

\subsection{An approach for disclosure risk control: k-anonymity}
The concept of $k$-anonymity~\cite{{ref:Samarati.2001},{ref:Samarati.Sweeney.1998},{ref:Sweeney:2002:algorithm},{ref:Sweeney.2002:IJUFKS:def}} 
encompasses a set of techniques for data protection that try to avoid reidentification risk. 
When protecting a database, we have $k$-anonymity when a record is cloaked into a set of other $k-1$ records. 
Thus, given the record of the intruder, reidentification always returns $k$ indistinguishable records.

We use the notation $T(A)$ to say that $T$ is a table with the set of attributes $A$. 
Let $B\subseteq A$ be a set of attributes of the table. 
We denote the projection of the table on the attributes $B$ by $T[B]$.
We suppose that every record contains information about a unique individual. 
An identifier $I$ in a database is an attribute such that it uniquely identifies the individuals behind the records. 
In particular, any entry in $T[I]$ is unique. 
A quasi-identifier $QI$ in the database is a collection of attributes $\{A_1,\dots,A_n\}$ that belongs to the public domain (i.e. are known to an adversary), such that they in combination can uniquely, or almost uniquely, identify a record \cite{Dalenius1986}. 
That is, the structure of the table allows for the possibility that an entry in $T[QI]$ is unique, or that there are only a small number of equal entries. 
In the former case the entry in $T[QI]$ uniquely identifies the individual behind the record and in the latter, the few other individuals with the same entries in $T[QI]$ may form a collusion and use secret information about themselves in order to make this identification possible.  

The former case may be formalized as follows. 
Consider the table $\tilde{T}$ obtained by permuting randomly the records of $T$. 
Let $s$ be an element in $\mathcal{P}(T)$ such that the entries of $s$ all belong to the same record in $\tilde{T}$ (and therefore also in $T$). 
Then, if there is a method of reidentification $r:\tilde{T}\times A\rightarrow [0,1]^n$ such that $r(s,a)[i]=1$ for some $a\in A$ and one index $i$, 
then $s$ belongs to $T[QI]$.

In the latter case, an $s$ such that $r(s,a)$ is large for a small subset $J$ of indices and 0 for the others (so that $s$ is linked to $J$) would also belong to $T[QI]$. 

\begin{example}
If a table contains information on students in a school class, the attributes birth data and gender could be sufficient to determine to which individual a record of the table corresponds, although it is possible that not all records will be uniquely identified in this way. 
Hence for this table, birth date and gender are an example of a quasi-identifier.
\end{example}
The following definition of $k$-anonymity appeared for the first time in~\cite{ref:Samarati.Sweeney.1998} (see also the articles by Samarati~\cite{ref:Samarati.2001} and Sweeney \cite{ref:Sweeney.2002:IJUFKS:def}).
\begin{definition}\label{1:def:nan}
A table $T$, that represents a database and has associated quasi-identifier $QI$, is $k$-anonymous if every sequence in $T[QI]$ appears with at least $k$ occurrences in $T[QI]$. 
\end{definition}

We have the following relation between $k$-anonymity and $n$-confusion. 

\begin{theorem}\label{thm:anonym-confusion}
Consider a database $X$ and a space of auxiliary information $A$. 
Suppose that our knowledge on $A$ has permitted us to determine correctly the quasi-identifier $QI$ of $X$. 
Let $\rho$ be a method of anonymization of databases that gives $k$-anonymity with respect to $QI$. 
Then there is a threshold $0<t<1$ such that any method of reidentification $r$ is of confusion at least $k$ with respect to $\rho(X)$, $r$, and $t$, so that $\rho$ provides $k$-confusion. 
\end{theorem}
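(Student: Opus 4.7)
The idea is to show that under $k$-anonymity the quasi-identifier $QI$ groups the records of $X$ (and hence the indices in $I$) into equivalence classes, each of size at least $k$, and that the compatibility clause of Definition~\ref{1:def:reidentification} forces any re-identification method $r$ to assign equal probability to indices lying in the same class. Once this constancy is established, the confusion bound will follow by elementary counting with threshold $t := 1/k$.

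The first step is therefore the key claim: for every protected record $y \in \rho(X)$ and every $a \in A$, the probability vector $r(y,a)$ is constant on every $QI$-equivalence class. Call these classes $E_1, \dots, E_s$ and set $m_j := |E_j|$, so that $m_j \geq k$ by Definition~\ref{1:def:nan}. Since by hypothesis the auxiliary information only lets us recognize records through $QI$, the records inside a single $E_j$ are interchangeable, and the objective probability cannot distinguish among them; compatibility therefore forces $r(y,a)[i] = p_j$ for every $i \in E_j$, with the $p_j \geq 0$ and $\sum_j m_j p_j = 1$.

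With this in hand I choose $t := 1/k$, which satisfies $0 < t \leq 1/k$ as required by Definition~\ref{def:ntconfusion}. For any $y$ and any $a$ one has
\[
\bigl|\{\, i \in I : r(y,a)[i] \geq t \,\}\bigr| \;=\; \sum_{j \,:\, p_j \geq t} m_j .
\]
Either no $p_j$ reaches $t$, in which case this cardinality is $0$ and the second branch of $M_{r,t}$ returns $|I| \geq k$; or some $p_{j_0} \geq t$, and then the cardinality is at least $m_{j_0} \geq k$. In either case $M_{r,t}(y,A) \geq k$, so taking the infimum over $y$ gives $\mathrm{conf}(r, \rho(X), A, t) \geq k$. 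Since $r$ was arbitrary, $\rho$ provides $(k, 1/k)$-confusion and hence $k$-confusion.

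The main obstacle is precisely the first step, since the compatibility requirement is only sketched in Definition~\ref{1:def:reidentification}, via belief functions. The rigorous justification of ``two indistinguishable records must receive the same probability'' has to be pinned down carefully: I would phrase it as an invariance/exchangeability property, observing that any permutation of $I$ that preserves both the $QI$-projection and the information accessible through $A$ must act trivially on $r(y,a)$, which is exactly what compatibility with the objective distribution demands in this symmetric setting. Everything after that step is just bookkeeping with the definition of $M_{r,t}$.
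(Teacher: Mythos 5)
Your proof is correct and rests on the same key idea as the paper's: $k$-anonymity with respect to $QI$ partitions the records into classes of size at least $k$, and the compatibility requirement forces any re-identification method to assign equal probabilities to indices within a class, so that the super-level set $\{i\in I : r(y,a)[i] \geq t\}$ is a union of whole classes and is therefore either empty or of cardinality at least $k$. Your honest flagging of the first step is not a gap relative to the paper: the paper's own proof asserts the analogous fact (``at least $k$ entries $r(y,a)[i]$ are equal to $x$'') with no more justification than you give, both arguments ultimately leaning on the informally specified compatibility clause of Definition~\ref{1:def:reidentification}. Where you genuinely diverge is the choice of threshold. The paper takes $t$ to be the minimum of the top values $x=x(y,a)$ of the vectors $r(y,a)$, which guarantees the first branch of $M_{r,t}$ always applies but makes $t$ depend on the particular $r$ (and on $X$), whereas Definition~\ref{def:ntconfusion} asks for one fixed $t$ working for \emph{all} re-identification methods; your uniform choice $t=1/k$ removes that dependence. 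The price is that you must invoke the second branch of $M_{r,t}$ (returning $|I|\geq k$ when no index reaches $1/k$), which the paper's adaptive threshold never needs; this is a legitimate reading of the definition, and on balance your version matches the quantifier structure of the theorem statement more faithfully.
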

\begin{proof}
Let $X$ be a table with records indexed by $I$ and $Y=\rho(X)$ a table that is $k$-anonymous with respect to $QI$, obtained by applying the anonymization method $\rho$ to $X$. 
The assumption that our knowledge on $A$ has permitted us to determine correctly the quasi-identifier $QI$, implies that we may express the auxiliary information as the restriction of the original table to the quasi-identifier, $A=X[QI]$. 
Then any reidentification method $r:Y\times A\rightarrow [0,1]^{|I|}$ will take values $r(y,a)$ in which at least $k$ entries $r(y,a)[i]$ are equal to $x$ with $0<x\leq 1/k$ and the other entries are smaller than $x$. 
Define $t$ as the minimum among all these $x$. 
Then the confusion of $r$ is at least $k$, for the threshold $0<t\leq 1/k$. 
Since $X$ was any table and $r$ any method of reidentification, $\rho$ provides $k$-confusion.  
\end{proof}
In some cases the threshold for a $k$-anonymous table will be $1/k$. 

Theorem~\ref{thm:anonym-confusion} shows that any table that satisfies $k$-anonymity also satisfy $n$-confusion with $n:=k$. 
The converse is not true, that is, a table that satisfies $n$-confusion does not necessarily satisfy $n$-anonymity. 
Next we will see an example of this. 

\begin{example}
Let $X$ be a numerical table with 30 distinct records (points) in $\mathbb{R}^3$. 
Suppose that we want to anonymize $X$ according to $k$-anonymity with $k=3$. 
A common approach for achieving $k$-anonymity is to apply a clustering algorithm to the table, see for example~\cite{DomingoTorra}.
A clustering algorithm returns a partition of the records so that the records in each class of the partition are in some sense similar. 
In this case the clustering algorithm returns a partition of the record set in 10 classes with exactly 3 records in each class, so that these records are points inside a ball of radius $r$ from the average of the three points. 
Given the records (points) $p_1,p_2,p_3\in \mathbb{R}^3$, let $A(\{p_1,p_2,p_3\})$ represent their average, and $V(\{p_1,p_2,p_3\})$ represent the normalized vector that is perpendicular to the plane defined by the points $p_1,p_2,p_3$. 
Let $c(p)$ represent the points in the cluster of the point $p$. 
Two alternatives are considered for the definition of a $3$-anonymization of $X$. 
\begin{enumerate}
\item Replace $p\in X$ by $A(c(p))$;
\item Let $p, p', p''$ be the three points in a cluster $c$. 
Replace $p$ by $A(c)$, $p'$ by $A(c)+\epsilon V(c)$ and $p''$ by $A(c)-\epsilon V(c)$, 
where $\epsilon$ is a positive real number smaller than the radius $r$ of the ball that contains the points in the cluster $c$.
\end{enumerate}
Then, the first alternative satisfies both $3$-anonymity and $(3,1/3)$-confusion. 
However, the second alternative does not satisfy $3$-anonymity, but it does satisfy $(3,1/3)$-confusion. 
\end{example}

\subsection{k-Anonymity for graphs}
A graph is a pair $(V,E)$, where $V$ is a set of vertices and $E$ is a family of 2-subsets of $V$ called edges. 
Sometimes there is associated an additional information to a vertex or to an edge. Such information is called a label. 
Graphs can be used to represent for example a social network. 
In the common approach for representing a social network as a graph, individuals are represented as vertices, information about relations between individuals is represented as edges and other information about the individuals or about the relations is represented as labels.  

The concept of $k$-anonymity was initially defined for tables. 
In order to apply $k$-anonymity to other data structures, observe that these can be represented in table form. 
For example, when applying $k$-anonymity to graphs, the adjacency matrix of the graph is a representation of the graph in table form. 
The adjacency matrix of a graph is a matrix in which both the rows and the columns are indexed by the vertices of the graph and the entries represent the number of edges between the corresponding vertices. 
A table is obtained by taking the rows and the columns of the matrix to be the records and the attributes of the table, respectively. 
Then every vertex of the graph occurs as an index of a record in the table and the attributes indicate the existence of edges to the other vertices. 
Depending on the situation, what is considered to be interesting information about the graph may vary. 
Therefore, other attributes may be included in the table, like for example the degree of the vertices.  
The information given by the adjacency matrix is however enough to deduce any other information available about the graph.
There are also other representations of graphs that contain the same information as the table just described. 
One example is the incidence matrix, in which the rows are indexed by the vertices, the columns are indexed by the edges and the entries indicate if the correponding  vertex is on the corresponding edge. 

As we saw in the previous section, the concept of $k$-anonymity is based on the existence of a quasi-identifier. 
A quasi-identifier is a collection of attributes of the table, and the quality of the anonymization of the table depends on the correct determination of the quasi-identifier. 
Usually it is the data owner who is the entity that executes the anonymization method and who therefore is the responsible for the correct determination of the quasi-identifier. 
In this process, the data owner must consider a table form representation of the graph. 
The choice of attributes for this table is of course crucial for the determination of the quasi-identifier. 
The data owner can not know in advance which information may be useful for the adversary. 
For example, the adversary could use the edge set of the vertices for the reidentification process, or he could use only the degree of the vertices. 
In the former example the adversary uses exactly the information given by the adjacency matrix, and in the latter example he uses information that can be derived from the adjacency matrix, and that can be attacked to the table as an additional attribute.  
A prudent data owner should assume that the table that represents the graph, and which serves for the determination of the quasi-identifier, contains all attributes that may be relevant for a reidentification intent of an adversary. 
We observe that this will occur exactly when the table that is defined for the adjacency matrix (or some other equivalent table) is $k$-anonymized!
This is so, because all other information about the graph can be deduced from this table. 

\subsection{Table data k-anonymity versus graph k-anonymity} 
Previous authors on this subject seem to agree on the opinion that $k$-anonymization for graphs differs from $k$-anonymization of tables. 
This opinion complicates the application of the concept of $k$-anonymity to graphs. 
We argue that graph $k$-anonymity is a special case of $k$-anonymity as defined by Sweeney.

Below we list the arguments used in  \cite{ref:Zhou.Pei.2008} to justify the difference between $k$-anonymization of table form data and graph data. 
\begin{enumerate}
\item They claim that it is much more challenging to model the background
knowledge of adversaries and attacks about social network
data than that about relational data. On relational data, they say, it is often
assumed that a set of attributes serving a quasi-identifier is
used to associate data from multiple tables, and attacks mainly
come from identifying individuals from the quasi-identifier.
However, in a social network many pieces of information
can be used to identify individuals, such as labels of vertices
and edges, neighborhood graphs, induced subgraphs, and their
combinations. It is much more complicated and much more
difficult than the relational case.

\item They also claim that is much more challenging to measure the information
loss in anonymizing social network data than that in
anonymizing relational data.

\item Finally, they claim that it is much more difficult to anonymize a
social network than data in table form, since changing labels
of vertices and edges may affect the neighborhoods of other
vertices, and removing or adding vertices and edges may
affect other vertices and edges as well as the properties of
the network.
\end{enumerate} 
Observe that only the first point is relevant for the definition of $k$-anonymity,  since it focus on the choice of quasi-identifier, at least if we consider this choice to form part of the definition of $k$-anonymity for graphs.  
The second and the third points are only important when defining an algorithm for $k$-anonymization of graphs. 
In particular the third point is of a completely practical nature. Also, it is important to realize that every kind of data has its peculiarities.
Also relational table data can hide unexpected quasi-identifiers, caused by the origin of the data. 
Trajectorial data can be tricky, when anonymizing car trajectories it is important to check that the published trajectories are feasible. For example, a car can not drive over a lake. 
For graphs we have a similar situation. 
When anonymizing a graph it is important to check that we do not produce edges which contain only one vertex. 
Concluding, we see that the anonymization algorithm must take into account the underlying structure of the data type that is being anonymized. 

Regarding the first point, we have seen in the previous discussion that all available information about a graph can be deduced from its adjacency matrix. 
This implies that if the matrix is $k$-anonymous, so is the graph, with respect to any graph property that can come in mind. 

\subsection{n-Confusion for graphs}
We will here make some remarks on $n$-confusion for graphs, although we spare a more detailed discussion on this subject for future work. 

The concept of $n$-confusion generalizes $k$-anonymity, and permits to define methods of anonymization that do not provide  $k$-anonymity, but that do provide the same level of anonymity as does $k$-anonymity. The main interest is then to minimize information loss. 
Just as for any table data, $n$-confusion can be used for privacy protection of data from social networks. 
We can separate data that is representable in graph form, from data that is not. 
Then a sketch for a family of  methods of anonymization could be the following:

\begin{enumerate}
\item Transform the graph data into a $k$-anonymous graph;
\item Observe that the $k$-anonymous graph provides a partition of the vertex set (see Section~\ref{sec:4}), hence defining a clustering of the records. 
Now apply a method of anonymization providing $n$-confusion with $n=k$ to the every cluster independently. 
\end{enumerate}
It is not hard to see (remembering that $k$-anonymity is a special case of $n$-confusion), that the result is a method of anonymization that provides $n$-confusion with $n=k$. 
Suitable methods of anonymization could be the following:
\begin{itemize}
\item 
Data distortion by probability distribution~\cite{LiCh}. 
This is a special case of synthetic data generation, 
in which the protected data is generated from what is determined to be the probability distribution of the original data. 
It is very important that the data generator provides non-reversible anonymity (as it is supposed to do). 
Otherwise, the result will not satisfy $n$-confusion. 
In particular, one should be careful with  methods that generate data with the same statistics as the original data, since too many restrictions may lead to a determined system of equations, and consequently, reversible anonymity.
The exact characterization of data generators that provides non-reversible anonymity is still to determine. 
Observe that although the new data asymptotically preserves statistics locally, if no further actions are taken, all global statistics will not be preserved. 
Also observe that if the clusters are small, then local statistics may not be preserved. 

The idea to combine a first step of clustering with a second step of data generation within the clusters, 
has previously been studied for non-graph data in \cite{Josep} and \cite{PAIS2012}, Section 4.

\item Data swapping. In this case, we suggest that all data should be uniformely scrambled, attribute by attribute. Then it is clear that the result satisfies $n$-confusion. Also, some of both the global and the local statistics are preserved, since the data entries are the same as the original data entries. Future work includes an evaluation of the information loss. Possible relaxations could be discussed in order to lower the information loss.   
\end{itemize}
\section{Previous work}
\label{sec:3}
Previous applications of $k$-anonymity to graphs have suggested several different quasi-identifiers, resulting in different definitions of $k$-anonymity for graphs. 
These definitions compete, and no agreement has been reached. We review some of the available definitions here below. 

\subsection{k-Anonymous graphs in terms of structural queries}
Hay et al.~\cite{ref:Hay.Miklau.Jensen.2008} explore the potential of structural queries on graphs for the reidentification of vertices and propose a formalization of the graph anonymization problem based on $k$-anonymity. 
Given a graph $G=(V,E)$ and an anonymization of it, $G'=(V',E')$, they let an adversary post queries on the structure of $G$ in a neigborhood of a fixed vertex $v\in V'$. 
The vertex sets $V$ and $V'$ are assumed to be equal, so that if $v$ is a vertex in the anonymized graph, then it is also a vertex in the original graph. 
They define the candidate set of the vertex $v\in V$ with respect to the query $Q$ as the set of vertices $cand_Q(x)\subseteq V$ such that the outcome of the query is the same for all vertices in $cand_Q(x)$, 
$$cand_Q(x)=\{v\in V:Q(x)=Q(v)\}.$$
As observed by Hay et al. the candidate sets with respect to a fixed query form a partition of the vertice set $V$ into equivalence classes. 
In their model of the behavior of an adversary, he posts a sequence of structural queries. 
The intersection of the results from this sequence of queries is compared with the additional information the adversary has on the vertex he wants to reidentify and may then provide a refinement of the reidentification of a vertex compared to what a single query provides.
A graph is then $k$-candidate anonymous if it satisfies the following condition. 

\begin{definition}~\cite{ref:Hay.Miklau.Jensen.2008}
\label{def:k.candidate.anonymity}
Let $Q$ be a structural query. An anonymized graph satisfies $k$-candidate anonymity given Q if:
$$\forall x \in V, \forall y \in cand_Q(x) : C_{Q,x}[y] \leq 1/k$$
where $C_{Q,x}[y]$ is the probability, given $Q$, of taking candidate $y$ for $x$. 
The authors define $C_{Q,x}[y]=1/{|cand_Q(x)|}$ for each $y \in cand_Q(x)$ and 0 otherwise.
\end{definition}


The anonymization method proposed in \cite{ref:Hay.Miklau.Jensen.2008} is based on the idea of $k$-anonymity as a partition of the record set. 
An algorithm is described that uses simulated annealing in order to find a partition of the vertex set that satisfies the $k$-anonymity constraint and maximizes the descriptive properties of the relations between the classes of the partition. 
The algorithm returns a generalization of the graph; a set of super-vertices, corresponding to the classes of the partition of the vertices,  connected by a set of super-edges (including super self-loops), corresponding to the structural relations between the classes. 
If the partition of the vertices is defined so that $k$-anonymity is obtained with respect to the information contained in the super-edges, then the privacy constraint is indeed satisfied. 
The data owner can choose to publish either the generalized graph or a sampled graph with the same properties as the ones described by the generalized graph. 

The approach in \cite{ref:Hay.Miklau.Jensen.2008} does not fix a unique quasi-identifier, but leaves it up to the data owner to choose which structural attributes are important to publish and/or protect. 
Among the previous work we have found, it is also probably the approach that is closest to the one presented in this article. 

\subsection{k-anonymous graphs with respect to the degree}
If we choose the degree of the vertices as the quasi-identifier of the graph, 
then we obtain the definition of $k$-anonymous graph proposed by Liu and Terzi~\cite{ref:Liu.Terzi.2008}. 
The reason for their proposal seems to be of pragmatic nature. 
Without doubt they are aware of the fact that the degree of the graph is not the only attribute that can be used as a quasi-identifier. However they explore the possibility to anonymize the graph with respect to this sole attribute while making as little changes in the graph as possible, using a greedy method.

Their definition of $k$-anonymity for graphs is as follows.  

\begin{definition}~\cite{ref:Liu.Terzi.2008}
A graph $(V,E)$ is $k$-degree anonymous if every number that appears as a degree of a vertex in $V$, appears as the degree for at least $k$ vertices in $V$.  
\end{definition}

\subsection{k-Anonymous graphs with respect to isomorphic 1-neighborhoods}
If we instead consider that the quasi-identifier of the graph is the induced subgraph of the neighbors of a vertex, then we obtain the definition of $k$-anonymous graph proposed by Zhou and Pei~\cite{ref:Zhou.Pei.2008}. 
Given a graph $G=(V,E)$ and a vertex $v\in V$, the $d$-neigborhood $Neighbor_G^d(v)$ is the induced subgraph of the set of vertices of distance $d$ from $v$. For $d=1$, the $1$-neighborhood $Neighbor_G^1(v)$ is the induced subgraph of the set of vertices that share an edge with $v$. 

\begin{definition}~\cite{ref:Zhou.Pei.2008}
Let $G=(V,E)$ be a graph. Then, the $1$-neighborhood of $v \in V$ is the induced subgraph of the neighbors of $u$, denoted by $Neighbor_G^1(u) = G(N(u))$ where $N(u)=\{v|(u,v)\in E\}$, and where $G(N(u))$ is defined with the vertices $N(u)$ and the edges $E_{N(u)} = \{(u,v) | (u,v) \in E$ and $u \in N(u)$ and $v \in N(u)\}$
\end{definition}

A graph isomorphism between two graphs is a mapping that transforms one graph into the other by reindexing the vertices. 
When there exists a graph isomorphism between two graphs, then we say that the graphs are graph isomorphic. 
The definition of $k$-anonymity for graphs based on isomorphic 1-neighborhoods is as follows. 
\begin{definition}~\cite{ref:Zhou.Pei.2008}\label{def:kanonymzhoupei}
Let $G=(V,E)$ be a graph. The graph $G'=(V',E')$ is a $k$-anonymization of $G$ if $V\subseteq V'$, $E\subseteq E'$ and  for every vertex in $V$ there are at least $(k-1)$ other vertices $v_1,\dots , v_{k-1}\in V$ such that 
$Neighbor_{G'}^1(A(u))$, $Neighbor_{G'}^1(A(v_1))$, $\dots$, $Neighbor_{G'}^1(A(v_{k-1}))$ are isomorphic.
\end{definition}
In \cite{ref:Zhou.Pei.2008} there is also an algorithm to accomplish $k$-anonymity according to Definition~\ref{def:kanonymzhoupei}.

\subsection{(k,l)-Anonymous graphs with respect to subsets of neighborhoods}
Yet another version of $k$-anonymity for graphs has been proposed. 
In this definition, a part from the parameter $k$, an additional parameter $l$ is used. 
The parameter $k$ plays the same role as in $k$-anonymity. 
The definition, proposed by Feder et al.~\cite{ref:Feder.Nabar.Terzi.2008}, is given below. 

\begin{definition}~\cite{ref:Feder.Nabar.Terzi.2008}
A graph $G = (V,E)$ is $(k,l)$-anonymous if for each vertex $v \in V$, there exists a set of vertices $U \subseteq V$ not containing $v$ such that $|U| \geq k$ and for each $u \in U$ the two vertices $u$ and $v$ share at least $l$ neighbors. 
\end{definition}

\subsection{A criticism of (k,l)-anonymity}
With the following example we show that for any pair $(k,l)$ with $k \leq l$ it is possible to find a graph that is $(k,l)$-anonymous, 
but in which reidentification is possible for a large proportion of the vertices, using only two of their neighbor vertices.  

\begin{example}
\label{ex:klanonimous.graph}
Let $k$, $l$, and $m$ be three arbitrary integer numbers such that $k \geq l\geq 1$, and $m>2$. 
Define a graph $G$ with vertices $V=\{v_0, \cdots, v_{k-1}, u_{0}, \cdots, u_{m-1}\}$ and let the edges $E$  be defined as the union of the following sets of edges: 
\begin{description}
\item [(i)] $(v_i, v_j)$ for all $v_i, v_j \in \{v_0, \cdots, v_{k-1}\}$
\item [(ii)]$(u_{i},u_{(i+1) \pmod{m}})$ and $(u_{(i+1) \pmod{m}},u_{i})$ for $i$ in $\{0, \dots, m-1\}$
\item [(iii)] for all $u_i \in \{u_0, \dots, u_{m-1}\}$, include $(u_i, v)$ and $(v, u_i)$ for all $v \in W_i$ where $W_i$ is a subset of $\{v_0, \cdots, v_{k-1}\}$ of cardinality $l$. 
\end{description}
Then, this graph satisfies $(k,l)$-anonymity. 
In addition, it is easy to see that any vertex $v_i$ can be reidentified by the pair of vertices $u_{i-1 \pmod{m}}, u_{i+1 \pmod{m}}$. 
\end{example}

The $k$ vertices $\{v_0,\dots,v_{k-1}\}$ in Example~\ref{ex:klanonimous.graph}, which we denote by $V_1$, form a clique, that is, a subgraph which is complete. 
The $m$ vertices $\{u_0,\dots,u_{m-1}\}$, which we denote by $V_2$, are only connected with two other vertices in $V_2$ and $l$ vertices in the clique $V_1$. 
The graph has $m+k$ vertices and $2 k (k-1)  +  2 m    +    2 m l $ edges. 

Observe that the parameter $m$ can be as large as desired. 
Therefore, we can make the proportion of vertices that can be reidentified by only two neighbors close to 1. 
Also note that the selection of $k$ and $l$ is arbitrary. 
The only requirement is that $k \geq l$. 
In case of $k < l$, it is easy to see that an $(l,l)$-anonymous graph constructed as in the example above will satisfy $(k,l)$-anonymity. 

\begin{example}
Figure~\ref{fig:klanonimity.k=8.l=3} illustrates the construction of a $(k=8,l=3)$-anonymous graph with $m=12$ vertices in the border following the construction of Example~\ref{ex:klanonimous.graph}. 
\end{example}

\begin{figure}[t]
\begin{center}
\includegraphics[angle=0, width=5cm]{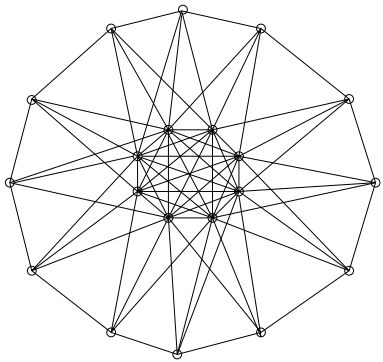}
\end{center}
\caption[]{Example of a graph satisfying $(k=8,l=3)$-anonymity but where information about two neighbours of any of the $m=12$ nodes in the border of the graph reidentifies it}
\label{fig:klanonimity.k=8.l=3}
\end{figure}

\section{A definition of k-anonymity for graphs}
\label{sec:4}
In this section we will present the definition of $k$-anonymity which we consider to be the appropiate for graphs. 

Let $(V,E)$ be a graph and let $v$ be a vertex in $V$. 
Define the neighbors of $v$ as the set of vertices of distance one to $v$, that is, the set of vertices 
$$N(v):=\{u\in V:(u,v)\in E\}.$$
We give the following definition of $k$-anonymous graph. 
\begin{definition}\label{def:kanonymVK}
Let $G=(V,E)$ be a graph. We say that $G$ is $k$-anonymous if for any vertex $v_1$ in $V$, there are at least $k$ distinct vertices $\{v_i\}_{i=1}^{k}$ in $V$, 
such that $N(v_i)=N(v_1)$ for all $i\in [1,k]$. 
\end{definition}
This definition of $k$-anonymity is appropiate for a data owner that has determined the quasi-identifier of the graph to be the sets of neighbors of the vertices. 
A graph that is $k$-anonymous following this definition has an adjacency matrix in which every row vector appears at least $k$ times.  
The adjacency matrix is a lossless representation of the graph, in the sense that it completely determines the graph. 
Therefore it can be deduced that a graph that is $k$-anonymous following Definition~\ref{def:kanonymVK} is $k$-anonymous with respect to any other graph property, since these properties are implicitly present in the matrix. 
In such a graph, there is a partition of the set of vertices, such that the vertices in the same part share exactly the same neighbors. 
Observe that they do not only share neighbor set, but that they also share non-neighbor set. 

\subsection{Characterizing the k-anonymous graphs} 
In this section we study the characterization of $k$-anonymous graphs. We present two propositions that establish conditions for this types of graphs. First, we consider regular $k$-anonymous graphs and then non-regular ones. 

\subsubsection{Regular k-anonymous graphs}

Let $G=(V,E)$ be a regular graph of degree $d$ that is $k$-anonymous. 
The subsets of vertices that share their neighbors form a partition of the vertex set of the graph, 
$$V=V_1\cup\cdots\cup V_n$$
such that for $i\neq j$ we have $N(v)=N(u)$ for all $v,u\in V_i$ and $N(v)\neq N(w)$ for all $v\in V_i$ and $w\in V_j$. 
However, note that in general it is not true that $N(v)\cap N(w)=\emptyset $ for $v\in V_i$ and $w\in V_j$ when $i\neq j$. 

Choose a vertex $v_1\in V$ and let $V_1=\{v_1,\dots,v_n\}$ be the other vertices in $V$ such that $N:=N(v_1)=N(v_i)$ for all $i\in [2,n]$. 
Because $G$ is $k$-anonymous, $n$ is larger than $k$. 
Let $\{u_j\}_{j=1}^{d}$ be the $d$ vertices in $N$. 
For any $j\in [1,d]$, we then have that $v_i$ belongs to $N(u_j)$ for all $i\in [1,k]$ and $N(u_j)$ has cardinality $d$ so that $k\leq d$.


Fix $u_{j_0}$ for some $j_0\in [1,d]$. 
Then there are vertices $\{w_s\}_{s=2}^{k}$ in $V$ such that $N(u_{j_0})=N(w_s)$ for all $s\in [2,k]$. 
In particular, $v_i$ belongs to  $N(w_s)$ for all $i\in [1,k]$ and all $s\in [2,k]$, so that $w_s\in N(v_i)$ and therefore $w_s\in \{u_j\}_{j=1}^{d}=N$ for all $s\in [2,k]$. 
This implies that any equivalence class $V_i$ of vertices sharing neighbors that contains one of the vertices $u_j$, is contained in the set of vertices $N=\{u_j\}_{j=1}^d$. 
Therefore there is a partition of $N$ into one or several equivalence classes.
The classes of this partition is a subset of the classes of the partition of the whole vertex set.  

If in addition we assume that all the equivalence classes of the partition of the vertex set have the same cardinality $|V_i|=k$, then we can deduce that $k$ divides $d$. 
In this case, $k$ of course also divides the order $|V|$ of the graph. 

Observe that in general the equivalence classes do not need to have the same cardinality. 
The only requirement is that the cardinalities should all exceed $k$ and that the cardinalities of the equivalence classes in a neighbor set $N$ sum up to $d$. 
One can for example construct a regular graph of degree 7 that is 3-anonymous, 
by splitting the neighbor sets into equivalence classes of cardinality 3 and 4. 

We collect the results in this section in the following Proposition~\ref{prop:properties_regular_kanonymvk}. 
 
\begin{proposition}\label{prop:properties_regular_kanonymvk}
Let $G=(V,E)$ be a $d$-regular, $k$-anonymous graph according to Definition~\ref{def:kanonymVK}. 
Then the following conditions are satisfied.
\begin{itemize}
\item  $k\leq d$;
\item 
There is a partition of $V=V_1\cup\cdots\cup V_n$
such that for $i\neq j$ we have $N(v)=N(u)$ for all $v,u\in V_i$ and $N(v)\neq N(w)$ for all $v\in V_i$ and $w\in V_j$.
However, in general $N(v)\cap N(w)\neq \emptyset$ for $v\in V_i$ and $w\in V_j$;
\item Fix a vertex $v\in V$. Then there is a partition of $N(v)$ into one or several equivalence classes.
The classes of this partition form a subset of the classes of the partition of the whole vertex set; 
\item If all the equivalence classes $V_i$ have the same cardinality $|V_i|=k$, 
then $k$ divides $d$ and $|V|$. 
\end{itemize}
\end{proposition}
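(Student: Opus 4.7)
My plan is to organize the entire argument around the equivalence relation $\sim$ on $V$ defined by $u \sim v$ iff $N(u) = N(v)$. The equivalence classes $V_1,\dots,V_n$ of $\sim$ immediately give the partition asserted in the second item; by $k$-anonymity each class has at least $k$ elements. The parenthetical warning that $N(v)\cap N(w)$ may still be nonempty across distinct classes is simply the observation that equality of entire neighbor sets is strictly stronger than merely sharing a common neighbor, which I would flag with a one-line remark.

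For the bound $k\le d$ I would pick any vertex $v_1$, let $\{v_1,\dots,v_k,\dots\}$ be its equivalence class (cardinality at least $k$ by $k$-anonymity), and choose any $u\in N(v_1)$. Since $N(v_i)=N(v_1)$ for every such $v_i$, each has $u$ as a neighbor, hence $\{v_1,\dots,v_k\}\subseteq N(u)$, forcing $k\le |N(u)|=d$.

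The engine for the remaining items is the third one: $N(v)$ is saturated under $\sim$. To prove this I would fix $v\in V$, take $u\in N(v)$, and show $[u]\subseteq N(v)$. If $u'\sim u$, then since $G$ is undirected we have $u'\in N(v)\Leftrightarrow v\in N(u')=N(u)\Leftrightarrow u\in N(v)$, and the last condition holds by choice of $u$. Hence every class that meets $N(v)$ is contained in it, so $N(v)$ is a disjoint union of some of the $V_i$, which is exactly the third item.

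The fourth item is then immediate: if every $|V_i|=k$, both the neighborhood $N(v)$ (of size $d$) and the whole vertex set $V$ are partitioned into classes of size exactly $k$, so $k\mid d$ and $k\mid |V|$. I do not foresee any serious obstacle; the only point that requires a moment of care is the biconditional used in the third item, which silently relies on $G$ being undirected so that $u\in N(v)\Leftrightarrow v\in N(u)$.
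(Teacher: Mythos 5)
Your proposal is correct and follows essentially the same route as the paper: both build the partition from the equivalence relation $N(u)=N(v)$, both obtain $k\le d$ by observing that a whole class of size at least $k$ sits inside $N(u)$ for any common neighbor $u$, and both prove the third item by showing neighborhoods are saturated under the relation via the symmetry $u\in N(v)\Leftrightarrow v\in N(u)$. Your saturation argument is stated a bit more cleanly (for an arbitrary member of the class rather than only the $k$ vertices guaranteed by anonymity), but the underlying idea is identical.
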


\subsubsection{Non-regular k-anonymous graphs}
In non-regular $k$-anonymous graphs, the vertex set of the graph will also form a partition of classes of vertices that share neighbors. By definition, the cardinality of these classes will be at least $k$. 

Many of the arguments we have made for regular $k$-anonymous graphs are true also for non-regular $k$-anonymous graphs. 
Let $N$ be the neighbor set of one class $V_i$ of vertices. 
Then $V_i$ are neighbors to the vertices in $N$. 
In a connected graph all vertices have neighbors, and therefore the minimum degree of a connected $k$-anonymous graph must be larger than $k$. 

We also see that, by the same argument as in the regular case, fixed a vertex $u\in N$, the partition $V_j$ to which it belongs must contain only vertices in $N$. 
Therefore, also in this case the neighbor sets of the graph are partitioned into equivalence classes of vertices that share the same neighbors. 

As commented previously, when $k$ is large compared to the order of the graph $|V|$, then there are only a few graphs of that order that are $k$-anonymous. 
This implies that if we protect a graph of small order so that it gets $k$-anonymous for a large $k$, then the information loss is important. 
However, if $k$ is small and the order of the graph is large, then much information is still kept. 

The following Proposition~\ref{prop:properties_kanonymvk} collects the results presented in this section and is a generalization of Proposition~\ref{prop:properties_regular_kanonymvk}.
 
\begin{proposition}\label{prop:properties_kanonymvk}
Let $G=(V,E)$ be a $k$-anonymous graph according to Definition~\ref{def:kanonymVK}. 
Then the following conditions are satisfied.
\begin{itemize}
\item The minimum degree of $G$ is larger than $k$;
\item 
There is a partition of $V=V_1\cup\cdots\cup V_n$
such that for $i\neq j$ we have $N(v)=N(u)$ for all $v,u\in V_i$ and $N(v)\neq N(w)$ for all $v\in V_i$ and $w\in V_j$.
However, in general $N(v)\cap N(w)\neq \emptyset$ for $v\in V_i$ and $w\in V_j$;
\item Fix a vertex $v\in V$. Then there is a partition of $N(v)$ into one or several equivalence classes.
The classes of this partition form a subset of the classes of the partition of the whole vertex set.  
\end{itemize}
\end{proposition}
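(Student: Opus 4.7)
The plan is to replay, essentially verbatim, the arguments already given for the regular case (Proposition~\ref{prop:properties_regular_kanonymvk}), observing that the regularity of $G$ was only ever used to derive the divisibility statement $k \mid d$, which does not appear here. I would organize the proof around the three bullets in the order: global partition, then local partition of $N(v)$, then minimum degree, since the later bullets build on the earlier ones.

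First I would introduce the equivalence relation $u \sim v \iff N(u) = N(v)$ on $V$. Reflexivity, symmetry and transitivity are immediate, so the resulting equivalence classes $V_1, \ldots, V_n$ give a partition of $V$. Definition~\ref{def:kanonymVK} is exactly the statement that each class has $|V_i| \geq k$, establishing the second bullet. The parenthetical remark that neighbour sets of distinct classes can still overlap does not need a formal argument; a single small example (for instance the 3-anonymous 7-regular graph alluded to earlier, or any bipartite-like construction where two classes share a common neighbour) suffices to make the point.

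Next I would deduce the third bullet. Fix $v \in V$, pick any $u \in N(v)$ and let $V_j$ be the equivalence class of $u$. For any $u' \in V_j$ we have $N(u') = N(u)$, and since $u \in N(v)$ means $v \in N(u)$, we get $v \in N(u')$, i.e. $u' \in N(v)$. Hence $V_j \subseteq N(v)$, so $N(v)$ is a union of some of the classes $V_j$, rather than merely intersecting them. This is the same containment argument used in the regular case, but the bookkeeping of degrees (used there only to conclude $k \leq d$ and eventually $k \mid d$) plays no role, so it transfers without change.

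For the first bullet I would argue that for any $u \in V$ that has at least one neighbour $v$, the at-least-$k$ vertices equivalent to $v$ all lie in $N(u)$, so $d(u) \geq k$; in a connected graph every vertex has a neighbour, whence the minimum degree is at least $k$. The only real obstacle is conceptual rather than technical: the bullet as stated is false without some hypothesis excluding isolated vertices, so I would phrase the proof under the connectedness assumption made informally in the discussion just above the proposition (and note that isolated vertices would otherwise trivially sit together in a single class $V_i$ of size at least $k$, for which the local partition statement still holds vacuously).
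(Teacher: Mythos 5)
Your proof is correct and follows essentially the same route as the paper, which likewise establishes this proposition by replaying the regular-case arguments (noting that regularity was only needed for the divisibility claim) and which also restricts the minimum-degree bullet to connected graphs. Your explicit containment argument for $N(v)$ being a union of equivalence classes, and your caveat about isolated vertices, are just slightly more careful renderings of what the paper does informally.
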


\section{A definition of (k,l)-anonymity for graphs as a relaxation of $k$-anonymity for graphs}
\label{sec:5}
The definition we just presented has the problem that it is sometimes rather restrictive, in particular for small data sets. 
Observe that if $k$ is large in relation to the order $|V|$ of the graph, 
then there is only a small number of non-isomorphic graphs that will satisfy the criterion of $k$-anonymity. 
Under these circumstances, the usefulness of the anonymized graph is therefore limited. 
This fact justifies the following relaxation of the definition of $k$-anonymity for graphs. 
Following the idea in \cite{ref:Feder.Nabar.Terzi.2008}, 
we introduce a second parameter $l$, 
and consider that the graph is $(k,l)$-anonymous if it is $k$-anonymous with respect to any subset of cardinality at most $l$ of the neighbor sets of the vertices of the graph.
The phrase ``a subset of cardinality at most $l$ of the neighbor sets of vertices'' has two distinct interpretations, 
resulting in two distinct definitions of $(k,l)$-anonymity for graphs. Which of the two definitions should be used, depends on the context. 
 
If the graph has no multiple edges (a pair of vertices can be connected by at most one edge), then the row vector in the adjacency matrix that represents the neighbor set $N(v)$ of a vertex $v$ is a vector in the space $\{0,1\}^n$, where $|V|=n$ is the number of vertices in the graph. 
If the graph has multiple edges, then the space is $(\mathbb{N}\cup\{0\})^n$. 
Interpret a subset of cardinality at most $l$ of the neighbor set of $v$ to be the entries of the vector $N(v)$ which are indexed  by a subset of the indices of $N(v)$ of cardinality at most $l$.
In this way we characterize the vertex $v$ by subsets of both its neighbors and its non-neighbors and we give the following definition of $(k,l)$-anonymity for graphs.

\begin{definition}[$(k,l)$-anonymity for graphs (I)]\label{def:klanonymVK1}
Let $G=(V,E)$ be a graph. We say that $G$ is $(k,l)$-anonymous if for any vertex $v_1\in V$ and 
for all subset of indices $I\subseteq [1,|N(v_1)|]$ of cardinality $|I|\leq l$ there are at least $k$ distinct vertices $\{v_i\}_{i=1}^{k}$ such that $N(v_i)$ and $N(v_1)$ coincide over $I$ for $i\in[1,k]$. 
\end{definition}

In a graph that satisfies Definition~\ref{def:klanonymVK1}, 
an adversary who fixes a vertex $v$ for reidentification and who has access to the induced subgraph on a subset of vertices of the graph of at most cardinality $l$ as auxiliary information, 
will only be able to reidentify $v$ with probability at most $1/k$. 

In contrast to Definition~\ref{def:klanonymVK1}, we could interpret a subset of cardinality at most $l$ of the neighbor set of $v$ to be a subset of the neighbors of $v$. 
In this way the non-neighbors are not taken into account.  
Using this interpretation we give the following formal definition of $(k,l)$-anonymity. 
 
\begin{definition}[$(k,l)$-anonymity for graphs (II)]\label{def:klanonymVK2}
Let $G=(V,E)$ be a graph. 
We say that $G$ is $(k,l)$-anonymous if for any vertex $v_1\in V$ and 
for all subset $S\subseteq N(v_1)$ of cardinality $|S|\leq l$ there are at least $k$ distinct vertices $\{v_i\}_{i=1}^{k}$ such that $S\subseteq N(v_i)$ for $i\in[1,k]$. 
\end{definition}

In a graph that satisfies Definition~\ref{def:klanonymVK2}, an adversary who fixes a vertex $v$ for reidentification and who has access to the induced subgraph of a subset of vertices of the graph that contains at most $l$ of the neighbors of $v$ as auxiliary information, will only be able to reidentify $v$ with probability at most $1/k$. 

Observe that the fact that Definition~\ref{def:klanonymVK1} and Definition~\ref{def:klanonymVK2} are relaxations of Definition~\ref{def:kanonymVK}, implies that a graph that satisfies $(k,l)$-anonymity is not in general $k$-anonymous, and that one could even find examples of $(k,l)$-anonymous graphs in which every vertex is uniquely identified by some property, say, by their degree. 

This observation means that in a situation where the data owner considers that there is an elevated risk that the adversary could have access to some auxiliary information besides a subgraph containing at most $l$ neighbors of any vertex, further protection is recommended. 
For example, in the case when the additional auxiliary information consists of the degrees of the vertices of the graph, the data owner could consider a graph protection method which combines $(k,l)$-anonymization and $k$-anonymization with respect to the degree. The latter method can be found in \cite{ref:Liu.Terzi.2008}. 

However, whenever the auxiliary information about the graph that is available to the adversary is restricted to
\begin{itemize}
\item the induced subgraph of the original graph on at most $l$ vertices in the case of Definition~\ref{def:klanonymVK1}, or
\item the induced subgraph of the original graph on a subset of the original vertices that contains at most $l$ neighbors of any of the original vertices, in the case of Definition~\ref{def:klanonymVK2},
\end{itemize}
then the information that he has about the degrees of the vertices is equally restricted, 
so that the data protection in a $(k,l)$-anonymous graph is just as high as it claims to be. 

It is obvious from the definition that $k$ must be smaller than the order $|V|$ of the graph. 
If we assume that the graph contains no loops, 
then the set of $k$ vertices that share neighbors and the set of $l$ neighbors that they share must be disjoint,
so that $l$ can not be larger than $|V|-k$. 
These bounds are attained, since the complete graph $(V,E)$ on $n:=|V|$ vertices, is $(k,n-k)$-anonymous for all $k\leq n$.   

Consider a graph that is $(k,l)$-anonymous according to Definition~\ref{def:klanonymVK2}. 
Let $d$ be the minimum degree of the graph. 
If $d$ is smaller than $l$, 
so that there is a vertex $v$ with a smaller number of neighbors than $l$, 
then $v$ can share at most $d$ neighbors with other vertices, 
so that the graph can be at most $(k,d)$-anonymous. 
Therefore, for $(k,l)$-anonymity according to Definition~\ref{def:klanonymVK2} it should always be assumed that $l\leq d$. 

Observe that if a vertex share a set of $l$ neighbors with $k$ other vertices, 
then it also shares any subset of these $l$ neighbors with the same $k$ vertices. 
Therefore, if a graph is $(k,l)$-anonymous according to Definition~\ref{def:klanonymVK2}, 
then it is also $(k,l')$-anonymous according to Definition~\ref{def:klanonymVK2} for all $l'\leq l$.  

Also, if a vertex share a set of $l$ neighbors with $k$ other vertices, 
then it also shares the same set of $l$ neighbors with any subset of the $k$ vertices. 
Therefore, if a graph is $(k,l)$-anonymous according to Definition~\ref{def:klanonymVK2}, then it is also $(k',l)$-anonymous according to Definition~\ref{def:klanonymVK2} for all $k'\leq k$. 

We collect these results in the following Proposition~\ref{prop:properties_klanonymvk2}. 
 
\begin{proposition}\label{prop:properties_klanonymvk2}
Let $G=(V,E)$ be a $(k,l)$-anonymous graph, following Definition~\ref{def:klanonymVK2}. 
Then the following conditions are satisfied.
\begin{itemize}
\item If $G$ has no loops, then $k+l\leq |V|$;
\item $l$ is smaller or equal to the minimum degree of the graph;
\item $G$ is $(k,l')$-anonymous for all $l'\leq l$;
\item $G$ is $(k',l)$-anonymous for all $k'\leq k$. 
\end{itemize}
\end{proposition}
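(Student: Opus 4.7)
The plan is to verify each of the four bullet points separately, drawing directly on Definition~\ref{def:klanonymVK2}. Most of the work has already been sketched in the text preceding the proposition, so the proof is largely a matter of formalization. I would handle the monotonicity points (3) and (4) first, then point (2), and finish with point (1) since it relies on (2) for the existence of large enough neighborhoods.

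For (3), if $l' \leq l$ then every subset $S \subseteq N(v_1)$ of cardinality at most $l'$ is in particular a subset of cardinality at most $l$, so the $k$ witnessing vertices guaranteed by $(k,l)$-anonymity serve equally well for $(k,l')$-anonymity. For (4), for any $S \subseteq N(v_1)$ with $|S| \leq l$ the at least $k$ vertices $\{v_i\}_{i=1}^{k}$ with $S \subseteq N(v_i)$ contain any subcollection of $k' \leq k$ such vertices, giving $(k',l)$-anonymity. For (2), I would argue by contrapositive: if the minimum degree $d$ of $G$ is strictly less than $l$, then some vertex $v$ has $|N(v)| = d < l$, so no subset $S \subseteq N(v)$ can have cardinality $l$, and the only non-vacuous constraints imposed on $v$ by $(k,l)$-anonymity involve subsets of cardinality at most $d$; the parameter $l$ is therefore tight only up to the minimum degree, so we take $l \leq d$. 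For (1), given (2) every vertex $v_1$ satisfies $\deg(v_1) \geq l$, so we can pick some $S \subseteq N(v_1)$ with $|S| = l$; by $(k,l)$-anonymity there exist $k$ distinct vertices $\{v_i\}_{i=1}^{k}$ with $S \subseteq N(v_i)$, and if $G$ has no loops then $v_i \notin N(v_i) \supseteq S$, so $\{v_1,\dots,v_k\}$ and $S$ are disjoint subsets of $V$, giving $|V| \geq k+l$.

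The main (minor) obstacle is conceptual rather than technical, and it concerns property (2). Strictly speaking a graph can satisfy Definition~\ref{def:klanonymVK2} with parameter $l$ larger than the minimum degree, since the conditions for subsets of cardinality $l$ that exceed a vertex's degree are simply vacuous. I would interpret the bullet point in the proposition, following the discussion immediately preceding it, as the statement that any such excess $l$ is wasted and gives no additional content, so without loss of generality one takes $l$ bounded by the minimum degree. Once this convention is made explicit, the remaining steps are routine set-theoretic manipulations of the definition.
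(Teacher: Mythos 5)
Your proof is correct and follows essentially the same route as the paper, which establishes all four bullets in the discussion immediately preceding the proposition: disjointness of the $k$ witnesses and the $l$ shared neighbors for the first bullet, the degree bound for the second, and the two monotonicity observations via taking subsets of $S$ and of the witness set. Your caveat on the second bullet is well taken and matches the paper's own phrasing (``it should always be assumed that $l\leq d$''), which likewise treats that bound as a normalization rather than a strict logical consequence of Definition~\ref{def:klanonymVK2}.
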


\section{Algorithms for the k-anonymization of graphs}
\label{sec:6}
In this section we will present three different algorithms.  
The first is an algorithm for $k$-anonymization of databases. 
The second algorithm determines the degree of $(k,l)$-anonymity of a given graph, that is, given a $k$ it determines the largest $l$ for which the graph is $(k,l)$-anonymous.
The third algorithm increases the degree of $(k,l)$-anonymity of a graph. More precisely, if the algorithm is given a graph that is $(k,l)$-anonymous, then it returns a similar graph that is $(k,l')$-anonymous, with $l'>l$. 
\subsection{A k-anonymization algorithm}
As commented in Section~\ref{sec:4}, it is easy to see that in a graph that is $k$-anonymous according to Definition~\ref{def:kanonymVK} there exists a partition of the vertex set in classes of cardinality at least $k$, 
so that the vertices in a class of the partition all share the same neighbors. 
This partition is easy to find. 
The row vectors (or, equivalently, the column vectors) of the adjacency matrix of the graph represent the neighbor set of their corresponding vertex, so that the vertices in the same class of a $k$-anonymous graph will have equal row vectors (and column vectors). 
Since every class contains at least $k$ vertices, 
the adjacency matrix of a $k$-anonymous graph is a table that satisfies $k$-anonymity. 

Now suppose that we have a graph with an adjacency matrix $A$ that does not satisfy $k$-anonymity and that we want to transform $A$ in order to obtain another table $A'$ that is similar to $A$, satisfies $k$-anonymity and is the adjacency matrix of a graph. 
That is, suppose that we want to define a method for $k$-anonymization of graphs. 
In this section we give an algorithm (Algorithm~\ref{alg:1}) that describes such a method. 

The algorithm is based on a clustering algorithm for graphs. 
We require that the clustering algorithm returns a partition of the vertex set $V$ of the graph and that each cluster or class of vertices contains at least $k$ vertices. 
In \cite{ref:Hay.Miklau.Jensen.2008b}, the authors use simulated annealing in order to find a partition of the vertices that satisfy $k$-anonymity and minimizes information loss, via a maximum likelihood
approach.
Heuristic methods are nice, because they work. However, other methods may offer more theoretical control over the properties of the output of the algorithm. 

In order to obtain good clustering results the choice of 
the distance to use is crucial. 
Not only classical distances are used for clustering, but also weaker topology concepts, like similarities~\cite{simref,simref2} and proximity relations~\cite{ref:Klir.Yuan.1995}. 
The paper \cite{IEEE_Taipei} describes some available algorithms for clustering of graphs, 
and discusses how to define similarities between vertices. 
In particular, two similarities for clustering of the
neighbor sets of the vertices of a graph are compared; the Manhattan similarity, based on the Manhattan distance, 
and the so-called 2-path similarity. 
The Manhattan similarity measures how many
equal entries the two vertices have in the adjacency matrix.
Formally, we have the following definition. 
\begin{definition}
Given two vectors in the adjacency matrix of $G$, $u,v \in \{0,1\}^{|V|}$, we denote by $sim_{l_1}(u,v)$ the Manhattan or $l_1$ similarity between $u$ and $v$, so that $$sim_{l_1}(u,v)=|V|-\sum_{i=1}^{n}|u[i]-v[i]|.$$
\end{definition}
The 2-path similarity measures the number of paths of length 2 between the two vertices. 
This can be calculated by taking the square of the adjacency matrix, and we define the 2-path similarity in this manner. 
\begin{definition}
Given two vectors in the adjacency matrix of $G$, $u,v \in \{0,1\}^{|V|}$, we denote by $sim_{2-path}(u,v)$ the 2-path similarity between $u$ and $v$, so that $$sim_{2-path}(u,v)=\sum_{i=1}^{n}u[i]v[i].$$
\end{definition}
It is interesting to note how two vertices that share many neighbors have many paths of length two between them, or, expressed in another way, they have many quadrangles that passes through them.
As explained in \cite{IEEE_Taipei}, the Manhattan similarity measures the similarity between
vertices with respect to both neighbors and non-neighbors, while the 2-path
similarity only measures the similarity between vertices with respect to their
neighbors, so that a common non-neighbor does not change the similarity between
two vertices. 
The differences between these two similarities should be compared with the differences between the two
definitions of $(k, l)$-anonymity in Definition~\ref{def:klanonymVK1} and Definition~\ref{def:klanonymVK2}. 
Algorithm~\ref{alg:2} and Algorithm~\ref{alg:3} illustrates the connection between the similarities and the definitions of $(k,l)$-anonymity. 

A clustered graph can be published as a generalized graph as described in \cite{ref:Hay.Miklau.Jensen.2008}. 
When the graph is $k$-anonymous, it is possible to publish a generalization of the graph that satisfies $k$-anonymity, without any information loss at all. 
This is indeed the idea behind $k$-anonymity for graphs. 

However, when the graph is not $k$-anonymous, a generalization of the graph that satisfies $k$-anonymity is never lossless. 
We want a method to transform any graph into a protected graph that satisfies $k$-anonymity according to Definition~\ref{def:kanonymVK}. 
We also want the protected graph to be similar to the original graph, so that the information loss is kept small.

\begin{algorithm}[t]
\KwIn{A graph $G=(V,E)$ and a natural number $k\leq|V|$}
\KwOut{A graph $G'=(V, E')$ that is $k$-anonymous according to Definition~\ref{def:kanonymVK}}
Calculate the matrix $S=(s_{ij})_{i,j=1}^{|V|}$ with 
$s_{ij}:=sim(v_i,v_j)$ for $v_i,v_j\in V$\;
Partition the rows of $S$ in clusters, hence obtaining a family of clusters $C$ of $V$\;
\ForEach{$C_i,C_j\in C$}{
\If{$\sharp\{(u,v)\in C_i\times C_j: \exists uv\in E\}> |C_i||C_j|/2$}
{
\ForEach{$(u,v)\in C_i\times C_j$}
{
\If{$uv\not\in E$}{
Add $uv$ to $E$\;
}
}
} 
\Else
{
\ForEach{$(u,v)\in C_i\times C_j$}
{
\If{$uv\in E$}
{
Delete $uv$ from $E$\;
}
}
}
}
Return $G$\;
\caption{An algorithm for $k$-anonymization of graphs using clustering and plurality rule}
\label{alg:1}
\end{algorithm}

\subsection{Algorithms for the calculation of the degree of (k,l)-anonymity of a graph, given k}
If we calculate the Manhattan similarity between all the vertices in the graph, then we can determine the highest $l$ such that the graph is $(k,l)$-anonymous according to Definition~\ref{def:klanonymVK1}, for $k$ fixed. 
This $l$ define neighborhoods or balls around each vertex $v$; set of vertices $\{u\}$ that satisfy $sim_{l_1}(v,u)\geq l$, with at least $k$ vertices in each neighborhood. 
When $k=1$, then the largest $l$ that gives us this family of neighborhoods is trivially equal to the order of the graph $l=|V|$, 
When $k>1$, then $l$ might be smaller, somewhere between 0 and the order of the graph. 
The cardinality of the set of neighborhoods is $|V|$ and it forms a fuzzy clustering of the neighbor sets of the vertices of the graph. 
The centroids of these clusters are the points $\left(\mathbb{Z}/(2)\right)^{|V|}$ that represent the neighbor sets, and since every cluster has cardinality $k$, it is obvious that whenever $k>1$ then the clusters overlap. 

If we instead use the 2-path similarity, then we can determine the largest $l$ so that the graph is $(k,l)$-anonymous according to Definition~\ref{def:klanonymVK2}. 
For $k=1$, the largest $l$ is equal to the minimum degree of the graph, and if $k>1$, then the largest possible $l$ is somewhere between 0 and the minimum degree. 

In this way we obtain two different measures on the degree of anonymity of the original graph; the largest parameters so that the graph satisfies the definitions of the two versions of $(k,l)$-anonymity. 
Which of the two measures is the most useful, depends on the context, or more precisely, it depends on if non-neighbors are useful for reidentification or not. 

Hence, we present here an algorithm (Algorithm~\ref{alg:2}) that, given a graph $G=(V,E)$ and a positive integer $k$, calculates the largest $l$ such that $G$ is $(k,l)$-anonymous according to Definition~\ref{def:klanonymVK1}. 
The algorithm shows the relation between the Manhattan or $l_1$ similarity \cite{IEEE_Taipei} and the $(k,l)$-anonymity according to Definition~\ref{def:klanonymVK1}, described before. 
 
\begin{algorithm}[t]
\KwIn{A graph $G=(V,E)$ and a natural number $k<|V|$}
\KwOut{The largest $l$ such that $G$ is $(k,l)$-anonymous according to Definition~\ref{def:klanonymVK1}}
$s:=|V|$\;
\While{Exists $v\in V$ such that $\sharp\{u:sim_{l_1}(u,v)\geq s\}<k$}
{$s--$;}
Return $l:=s$\;
\caption{An algorithm that given $G$ and $k$ computes the largest $l$ for which $G$ is $(k,l)$-anonymous, using the Manhattan similarity}
\label{alg:2} 
\end{algorithm}

Next we present an algorithm (Algorithm~\ref{alg:3}) that, given a graph $G=(V,E)$ and a positive integer $k$, calculates the largest $l$ such that $G$ is $(k,l)$-anonymous according to Definition~\ref{def:klanonymVK2}. 
The algorithm shows the relation between the 2-path similarity \cite{IEEE_Taipei} and the $(k,l)$-anonymity according to Definition~\ref{def:klanonymVK2}, as described before. 

\begin{algorithm}[t]
\KwIn{A graph $G=(V,E)$ and a natural number $k<|V|$}
\KwOut{The largest $l$ such that $G$ is $(k,l)$-anonymous according to Definition~\ref{def:klanonymVK2}}

\ForEach{$v_i\in V$}{$s[i]:=\mbox{Degree}(v_i)$\;}
\While{Exists $v_i\in V$ such that $\sharp\{u\in V:sim_{2-path}(u,v_i)\geq s[i]\}<k$} 
{$s[i]--$\; 
}
Return $l:=\min_i(s[i])$ \; 
\caption{An algorithm that given $G$ and $k$ computes the largest $l$ for which $G$ is $(k,l)$-anonymous, using the 2-path similarity}
\label{alg:3}
\end{algorithm}

\subsection{An algorithm to increase the degree of (k,l)-anonymity of a graph}
Finally we present an algorithm (Algorithm~\ref{alg:4}) that, given a graph $G$ that is $(k,l)$-anonymous with respect to the similarity $sim$, returns either a graph $G'$ that is based on $G$ but that is $(k,l')$-anonymous with $l'=l+1$ or the empty graph without vertices.  
\begin{algorithm}[t]
\KwIn{A graph $G=(V,E)$ and a natural number $k<|V|$}
\KwOut{A graph $G'=(V',E')$ such that if $G$ is $(k,l)$-anonymous, then  $G'$ is $(k,l')$-anonymous  with $l'=l+1$}

Calculate the largest $l$ such that $G$ is $(k,l)$-anonymous, using Algorithm~\ref{alg:2} or Algorithm~\ref{alg:3}\;
$l':=l+1$\;
\While{Exists $v\in V$ such that $\sharp\{u\in V:sim(u,v)\geq l'\}<k$}
{
\If{$uv\in E$}{Delete $uv$ from $E$\;}
Delete $v$ from $V$\;
}
Return $G$\;
\caption{An algorithm that given a $(k,l)$-anonymous graph $G$ returns a $(k,l+1)$-anonymous graph}
\label{alg:4}
\end{algorithm}

The algorithm deletes vertices $v$ that have a set of similar vertices $\sharp\{u\in V:sim(u,v)\geq l'\}$ which is too small, that is, smaller than $k$. 
Since the deletion of a vertex affects the neigbor sets of the other vertices $v$ so that $\sharp\{u\in V:sim(u,v)\geq  l'\}$ may decrease, causing the deletion of more vertices in the next execution of the loop, there is a risk that the algorithm deletes all vertices of the graph. 
In order to avoid this phenomenon, we recommend, for some $v\in V$ with $\sharp\{u\in V:sim(u,v)\geq l'\}<k$, the addition of new vertices to  $V$ to augment $\sharp\{u\in V:sim(u,v)\geq l'\}$. 
Such a new vertex $\tilde{v}$ must be connected to the already existing vertices in $V$ in a way so that the set $\sharp\{u\in V:sim(u,\tilde{v})\geq l'\}\geq k$, so that we do not introduce new problematic vertices. 
An easy  solution to this problem is to let the new vertices $\tilde{v}$ be copies of the problematic vertex $v$. 
However, the use of this solution would cause the algorithm to loose its status as anonymization method, since the records in the resulting table would not correspond to distinct individuals, so that it fails to be a database according to our definition. 
As a consequence, the risk of reidentification for the vertices protected in this way will be higher than for the vertices in a graph that is protected using a real method of $(k,l')$-anonymization method. 

\section{Conclusion}
\label{sec:7}
In this article we have provided a formal framework for reidentification in general. 
We have defined $n$-confusion as a concept for modelling the anonymity of a database table and we have proved that $n$-confusion is a generalization of $k$-anonymity. 
Then after a short survey on the different available definitions of $k$-anonymity for graphs we provided a new definition for $k$-anonymity, which we consider to be the correct definition. 
It has been explained how this definition can be used in combination with $n$-confusion, for the anonymization of data from, for example, social networks. 

We have provided a description of the $k$-anonymous graphs, both for the regular and the non-regular case. 
However, under some conditions our definition of $k$-anonymity is quite strict, 
so that it is only satisfied by a small number of graphs. 
In order to avoid this problem, 
we have introduced the more flexible definition of $(k,l)$-anonymity. 
Our definition of $(k,l)$-anonymity for graph is meant to replace the definition in \cite{ref:Feder.Nabar.Terzi.2008}, which we have proved to have severe weaknesses. 
We have given two variants of the definition of $(k,l)$-anonymity, which may serve under different conditions. 

We have also provided a set of algorithms;
one algorithm that given a graph $G$ and a natural number $k$ returns a graph $G'$ based on $G$ that is $k$-anonymous, 
two algorithms that given a graph $G$ and a natural number $k$ calculates the largest $l$ such that $G$ is $(k,l)$-anonymous according to our two different definitions of $(k,l)$-anonymity, and finally, 
one algorithm that given a graph $G$ that satisfies $(k,l)$-anonymity returns a graph $G'$ similar to $G$ that satisfies $(k,l+1)$-anonymity. 

\section*{Acknowledgements}

Partial support by the Spanish MEC projects ARES (CONSOLIDER INGENIO 2010 CSD2007-00004), eAEGIS (TSI2007-65406-C03-02 ),  COPRIVACY (TIN2011-27076-C03-03), and RIPUP (TIN2009-11689) is acknowledged. One author is partially supported by the FPU grant (BOEs 17/11/2009 and 11/10/2010) and by the Government of Catalonia under grant 2009 SGR 1135. The authors are with the UNESCO Chair in Data Privacy, but their views do not necessarily reflect those of UNESCO nor commit that organization.

\end{document}